\acrodef{bf}[BF]{Bloom filter}
\acrodef{rbf}[RBF]{Rational Bloom filter}
\acrodef{vsbbf}[VSBBF]{Variably-Sized Block Bloom filter}
\begin{document}

\title{Extending the Applicability of Bloom Filters by Relaxing their Parameter Constraints}
\titlerunning{Extending the Applicability of Bloom Filters}
%
\author{Paul Walther
\orcidID{0000-0002-5101-5793} \and
Wejdene Mansour
\orcidID{0009-0008-4362-2092} \and
Johann Maximilian Zollner
\orcidID{0000-0003-3742-8468}\and
Martin Werner
\orcidID{0000-0002-6951-8022}}
\authorrunning{P. Walther et al.}
%
\institute{TUM School of Engineering and Design\\Technical University of Munich, Munich, Germany \\
\email{\{paul.walther,wejdene.mansour,maximilian.zollner,martin.werner\}@tum.de}}
\maketitle              

\begin{abstract}
These days, Key-Value Stores are widely used for scalable data storage. In this environment, \acf{bf}  serve as an efficient probabilistic data structure for representing sets of keys. They allow for set membership queries with no false negatives and with the right choice of the main parameters -- length of the \ac{bf}, number of hash functions used to map an element to the array's indices, and the number of elements inserted -- the false positive rate is optimized. However, the number of hash functions is constrained to integer values, and the length of a \ac{bf} is usually chosen to be a power of two to allow for efficient modulo operations using binary arithmetic. 
In this paper, we relax these constraints by proposing the Rational Bloom filter, which allows for non-integer numbers of hash functions. This results in optimized fraction-of-zero values for a known number of elements to be inserted. Based on this, we construct the Variably-Sized Block \ac{bf} to allow for a flexible filter length, especially for large filters, with efficient computation.
\keywords{Bloom Filter  \and Key-Value-Store \and Filter Length \and Hash Function Choice.}
\end{abstract}
This preprint has not undergone peer review or any post-submission improvements or corrections. The Version of Record of this contribution is published in Chrysanthis, P.K., Nørvåg, K., Stefanidis, K., Zhang, Z., Quintarelli, E., Zumpano, E. (eds) New Trends in Database and Information Systems. ADBIS 2025. Communications in Computer and Information Science, vol 2676. Springer, Cham and is available online at \url{https://doi.org/10.1007/978-3-032-05727-3_2}.

\section{Introduction}
Key-Value Stores proved themselves as an efficient storage model to support all steps in the data life cycle \cite{Werner.2015b,DeCandia.2007}. 
They store data as pairs of unique keys and associated values and thereby allow for efficient retrieval and modification of the data. Consequently, an efficient representation of the keys in memory is necessary for the scalability and low-latency handling of Key-Value Stores.
At the same time, large, sparse, low-cardinality data, such as rasterized global building footprints can be modelled as sets \cite{Werner.2021}. As access to this data type is often random, the requirements for an efficient in-memory representation are similar to those of a Key-Value Store. 

For both applications \acp{bf} as probabilistic data structures to store the set property were proposed before \cite{Werner.2021,Lu.2012}. They are configured with three parameters: length $m$ in bits, number of hash functions $k$, and number of elements to store in the \ac{bf} $n$ \cite{Bloom.1970}.
Thereby, the memory footprint trades against the error probability of the data structure which is solely dependent on the three forementioned parameters. However, in practice the optimal choice of parameters is constrained (compare Section \ref{sec:properties}). 

In this context, we propose a method allowing to choose any \textbf{rational number} instead of only an integer number \textbf{of hash functions} $k$, which is very powerful if the set is immutable as it allows for more tailored \acp{bf} and, therefore, decreases false positive rates. 
Further we propose a method to choose \textbf{non-power-of-two sizes for the \ac{bf} $m$ without performance degradation} (by non-uniformity of access) and still without explicit modulo computations. 
The constraints of \acp{bf} and both proposed methods are visualized in Figure~\ref{fig:teaser}. First ideas in this direction were proposed in \cite{Walther.2024}.

\begin{figure}[t]
  \includegraphics[width=\textwidth, trim=0 11.7cm 16.05cm 0, clip]{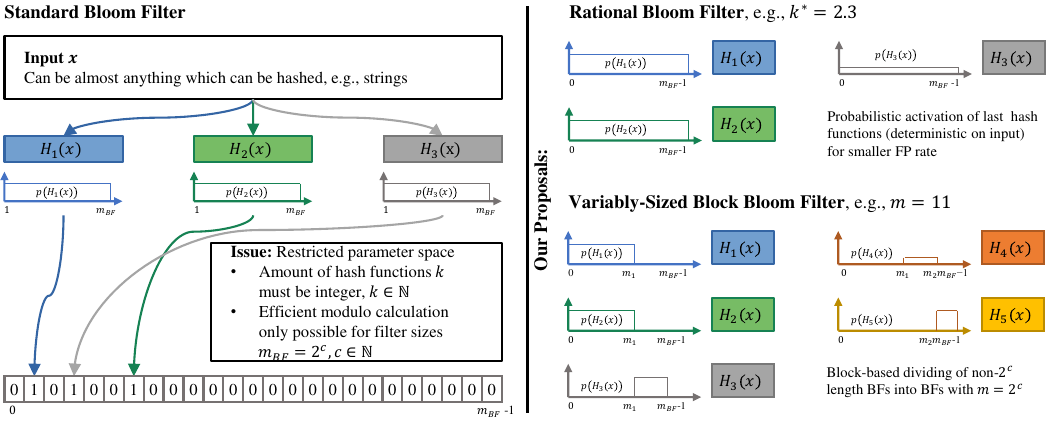}
  \caption{Overview of the proposed approaches to relax \ac{bf} parameter constraints: \acl{rbf} and \acl{vsbbf}}
  \label{fig:teaser}
\end{figure}

\section{Typical Bloom Filter Properties and Constraints}
\label{sec:properties}
The \ac{bf} is a data structure for the efficient probabilistic storage of sets \cite{Bloom.1970}. It consists of a binary array (the filter) and methods for storage in the filter and querying for set membership. The empty filter is an all-zero bit array of length $m$. 
Due to its structure, the \ac{bf} guarantees \texttt{true} for inserted items (no false negatives) but may falsely report uninserted \texttt{1}'s (false positives)~\cite{Bruck.2006,Werner.2021}.

To \textit{insert} an element $x$ into a \ac{bf}, the element is hashed with $k$ uniformly distributed pairwise independent hash functions $H_i(x)$, with $i \in \{1, \dots, k\}$. The hashing maps from the input space of all elements (universe) to the integer range $\{0, \dots,m-1\}$. The \ac{bf} is then set to \texttt{1} at the locations denoted by the same hash functions $H_i(x)$. If the value is already \texttt{1} in one location, it remains \texttt{1} \cite{Gupta.2017}. 
For the \textit{membership query}, the \ac{bf} checks the $k$ indices computed by $H_i(x)$. If any of the denoted values is \texttt{0}, the element is not part of the stored set. Otherwise, if all values are \texttt{1}, the element is present, or it is a false positive error \cite{Gupta.2017}.  
For a given number of elements to store $n$, the \ac{bf} length $m$ in bits, and $k$ hash functions, the false positive rate $p_{\text{FP}}$ of the \ac{bf} can be calculated \cite{Bruck.2006} as
\begin{equation}
    p_{\text{FP}}=\left(1-\left(1-\frac{1}{m}\right)^{kn}\right)^k \approx \left(1 - e^{-kn/m}\right)^k \approx (1-p)^k.
\label{eq:fp_rate}
\end{equation}
For a given $n$, the optimal number of hash functions $k^*$ minimize $p_{\text{FP}}$ with a fraction of zeros $\mathit{foz}=\frac{1}{2}$. This maximizes the entropy of the filter and holds the highest information density~\cite{Werner.2015b}, yielding
\begin{equation}
    k^* = \frac{m}{n}\ln{2}.
    \label{eq:optimalk}
\end{equation}

Nayak and Patgiri explain five main challenges with existing \ac{bf} approaches \cite{Nayak.2019}, namely: 
to reduce the \textit{false positive rate},
the \textit{length adaption} of \acp{bf}, especially with initially unknown dataset sizes,
the \textit{deletion of elements}, which is not possible with a standard \ac{bf} without recalculation of the whole index, 
to implement an \textit{efficient hashing} method, which does not negatively influence the performance of the \ac{bf} and 
the correct determination of the \textit{number of hash functions} to use. 
An approach to these challenges is the relaxation of the constraints for the filter length $m$, the number of hash functions $k$, and the number of elements to store in a \ac{bf} $n$.
With that, an unconstrained, optimal choice of the number of hash functions leads to improved false positive rates (compare Eqn. \ref{eq:fp_rate}), and new approaches to variable filter lengths might improve the scalability of the approach and allow for more efficient hashing and more equally distributed false positive rates~\cite{Estebanez.2014,Almeida.2007}.

In the known literature, the \textit{number of hash functions} $k$ is restricted to integers, as applying, e.g., $0.3$ hash functions is not defined. This reduces flexibility in constructing the optimal filter for given $m$ and $n$. Especially for small $k$, this may result in a non-optimal false positive rate. 
Figure~\ref{fig:amounthashfunctions} visualizes this effect.
\begin{figure}[t]
    \centering
    \includegraphics[width=\textwidth]{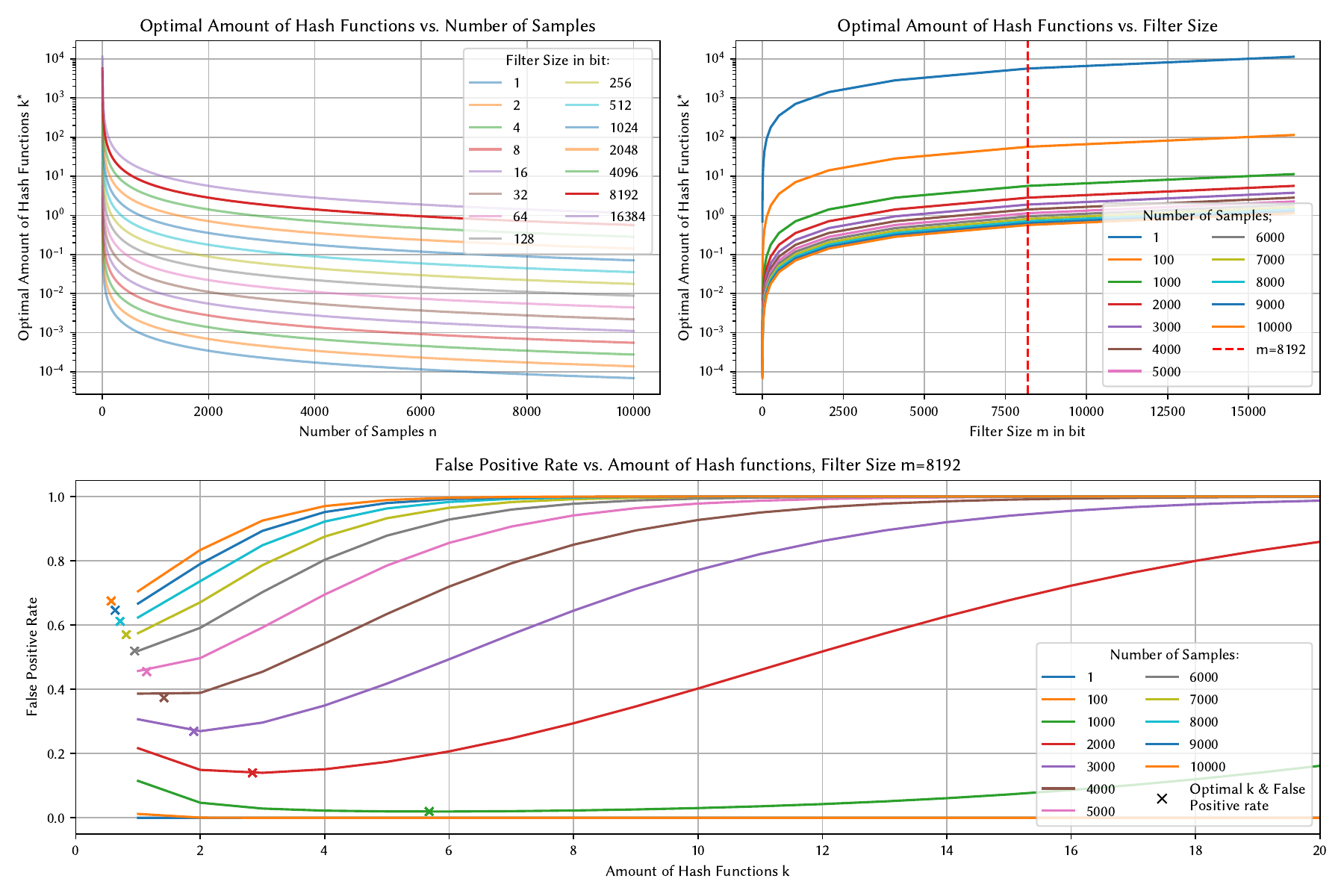}
    \caption{Dependencies between the optimal number of hash functions $k^*$ and number of samples $n$ (top left) and filter size $m$ (top right). Dependency of the false positive rate for a fixed filter size of $m=8192$ on varying numbers of hash functions $k$ and number of samples (bottom). The calculated optimal configuration for each number of samples is marked with a cross.}
    \label{fig:amounthashfunctions}
\end{figure}
Furthermore, the false positive probability is unevenly distributed across elements in multi-hash \acp{bf} \cite{Almeida.2007}, as hash collisions, that means similiar slots denoted by different hash functions for one element, increase false positives. To avoid this effect using less hash functions is beneficial. 


The \textit{length of the \ac{bf}} $m$ is usually chosen to be a power-of-two, $m=2^c, c \in \mathbb{N}_0$ due to efficiency considerations: General hash functions $h_i(x)$ have to be modified to specific hash functions $H_i(x)$ mapping the input space to indices of the \ac{bf} $\{0, \dots,m-1\}$. The simplest method is taking the modulo $h_i(x)\%m$. It is most efficient for power-of-two filter lengths as it simplifies to a binary AND ($\wedge$) operation~\cite{Estebanez.2014}: 
\begin{equation}
    H_i(x)=h_i(x)\mod{m} =h_i(x) \wedge (2^c -1)
    \label{eq:modulo}
\end{equation}
Consequently, when high performance is desired, the selectable lengths for the \ac{bf} always double from one option to the next. This might not be a problem for small \acp{bf}, but for large \acp{bf}, the step sizes become huge. 
For example, for a \ac{bf} size of 16 GB, the neighboring values 8 GB and 32 GB are already very far from each other. Especially in computer systems, which often have memory in the same size intervals, this may become inefficient: Some space is always used, e.g., by the operating system. Therefore, the \ac{bf} may have at most half of the available memory size. 


\section{Rational Bloom Filter}
\label{sec:methods:sub:rationalbloom}
For \acp{bf}, the theoretically optimal number of hash functions, denoted by Eqn. \ref{eq:optimalk}, is generally not an integer but a rational number. An example would be a \ac{bf} of size $m=10$ and $n=5$ elements to store, resulting in an optimal number of $\approx 1.386$ hash functions. This is not feasible in traditional \acp{bf} as they only allow for an integer number of hash functions and, therefore, require an approximation of the optimal number of hash functions with the next integer. However, allowing a non-integer number of hash functions poses advantages: For an exactly known number of elements to store and a \ac{bf} length constrained, e.g., by hardware, the optimal number of hash functions improves the false positive rate. Further, a rational number of hash functions allows for more flexibility in selecting other \ac{bf} parameters. 
Practically, this raises the question of how to apply a rational number of hash functions. For this, we propose a probabilistic approach.
\begin{definition}
    \label{def:probhashfunc}
    A \textbf{probabilistically activated hash function} is a hash function that is not applied to every sample.
    Instead, it is only activated with a probability of activation $0\le p_{activation}\le1, p_{activation} \in \mathbb{R}$.
\end{definition}
In \acp{bf} the activation probability is the rational part of the number of hash functions: $p_{activation}=k-\lfloor k\rfloor$. While this initially seems to add extra effort, only small additional costs are implied in practice for $k>1$. The always-applied hash functions $H_r(x)$, with $r \in [1,\lfloor k \rfloor]$, give enough pseudo-random information that can be used to decide whether the probabilistically activated hash functions should be used for a specific sample. For a particular given sample, this activation is additionally deterministic. 
Based on this, we develop the \ac{rbf} (visualized in Figure \ref{fig:rationalbloom}).
\begin{definition}
    A \textbf{\acf{rbf}} is a \ac{bf} with a non-integer number $k\in \mathbb{R}^+$ of hash functions, consisting of standard \ac{bf} procedures for $\lfloor k\rfloor$ hash functions, while a probabilistically activated hash function represents the non-integer part $k-\lfloor k\rfloor$ (compare Definition \ref{def:probhashfunc}). 
\end{definition}
\begin{figure}
    \centering
    \includegraphics[width=0.5\textwidth, trim=0 16.2cm 25.345cm 0, clip ]{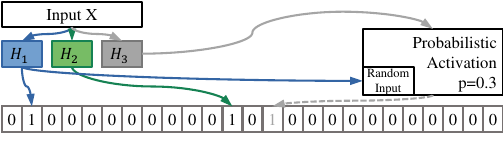}
    \caption{The \ac{rbf} for $k=2.3$, with probabilistically activated hash function $H_3(x)$.}
    \label{fig:rationalbloom}
\end{figure}
In practice this can be implemented by Algorithm \ref{alg:rationalbloom}:
A hash function maps uniformly from the input domain to the integer set $\{0, \dots m\}$. Comparing the obtained hash value to the proportion of the maximum value $p \cdot m$ allows for overall probabilistic activation of $H_{\lfloor k\rfloor+1}$, which is still deterministic with regards to the input element $x$ (compare Algorithm \ref{alg:rationalbloom}, Line \ref{alg:rationalbloom:decision}).
Furthermore, the \ac{rbf} proposal actually allows for the hashing trick by Kirsch and Mitzenmacher \cite{Kirsch.2006}, which gives an additional sufficient independent hash function by only an additional cheap multiplication and addition.
\begin{algorithm}
\footnotesize
\caption{Application of Hash Functions in the \ac{rbf}}
\label{alg:rationalbloom}
\KwData{Element $x$, \ac{bf} $BF$ with length $m$, set of always-applied hash functions $\{H_1, \dots, H_{\lfloor k \rfloor}\}$, probabilistically activated hash function $H_{\lfloor k\rfloor+1}$, rational number of hash functions $k$\;}
\KwResult{Set \ac{bf} bits for indices denoted by the rational number of hash values of input element x}
\For{each $H_j$ in $\{H_1, \dots, H_{\lfloor k \rfloor}\}$}{
    Set $BF[H_j(x)] \gets 1$\;
}
Set $p_{activation} = k - \lfloor k \rfloor$\;
Random hash value $H_r(x) \in [0, m]$;  // No additional calculation needed if we, e.g., choose $H_r(x)=H_{\lfloor k\rfloor}(x)$\;
\If{$H_r(x) < (p_\text{activation}\cdot m)$}{\label{alg:rationalbloom:decision}
    Set $BF[H_{\lfloor k\rfloor+1}(x)] \gets 1$\;
}
\end{algorithm}
\begin{lemma}
    \label{lemma:rational_no_fn}
    The \ac{rbf} has no false negatives.
\end{lemma}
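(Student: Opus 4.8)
The plan is to reduce the claim to the classical no-false-negatives argument for ordinary \acp{bf}, augmented by a single new observation: the probabilistically activated hash function behaves deterministically on any fixed element. Recall that a false negative means an element $x$ which was previously inserted is later reported as absent by a membership query. I would therefore fix an arbitrary inserted $x$ and show that every bit inspected during its query is guaranteed to be $1$.

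First I would establish the monotonicity of the filter: by Algorithm \ref{alg:rationalbloom}, insertion only ever performs assignments $BF[\cdot]\gets 1$ and never resets a bit to $0$, so once a bit is set it stays set for the lifetime of the filter. Second, I would split the indices touched by $x$ into a deterministic part and a probabilistic part. For the always-applied functions $H_1,\dots,H_{\lfloor k\rfloor}$ the argument is identical to the standard case: insertion sets $BF[H_j(x)]=1$ for every $j\in\{1,\dots,\lfloor k\rfloor\}$, and the query checks exactly these same indices, which by monotonicity are still $1$. Third, and this is the crux, I would treat $H_{\lfloor k\rfloor+1}$. The activation test in Line \ref{alg:rationalbloom:decision} compares $H_r(x)$ against the fixed threshold $p_{activation}\cdot m$, where both $H_r(x)$ and $p_{activation}=k-\lfloor k\rfloor$ depend only on $x$ and on the fixed filter parameters. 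Consequently the boolean outcome of the test is a deterministic function of $x$ (as stressed in Definition \ref{def:probhashfunc}): it yields the identical verdict at insertion time and at query time.

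From these pieces the conclusion follows by a case distinction on that verdict. If $H_{\lfloor k\rfloor+1}$ was not activated for $x$, then it is likewise not activated during the query, so the query inspects only $H_1,\dots,H_{\lfloor k\rfloor}$, all equal to $1$. If it was activated, then insertion additionally set $BF[H_{\lfloor k\rfloor+1}(x)]=1$, and the query, reaching the same verdict, inspects precisely this extra index as well, which is again $1$ by monotonicity. In either case every inspected bit equals $1$, so the query reports membership and no false negative can occur.

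The main obstacle is conceptual rather than computational: one must make precise that the word \emph{probabilistic} does not introduce randomness \emph{between} the insertion and the query of the same element. The whole argument hinges on the determinism of Line \ref{alg:rationalbloom:decision}; the probability $p_{activation}$ merely governs the fraction of the universe for which the extra hash is used, while for each individual element the decision is fixed. I would therefore state this coupling explicitly and note the implicit requirement that $H_r$ be the same function in both operations (for instance the reuse $H_r=H_{\lfloor k\rfloor}$ suggested in the algorithm), since a query drawing on a different random source would break the coupling and could indeed manufacture false negatives.
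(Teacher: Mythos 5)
Your proof is correct and follows essentially the same route as the paper's: the standard no-false-negatives argument for the always-applied hash functions $H_1,\dots,H_{\lfloor k\rfloor}$, combined with the observation that the activation decision for $H_{\lfloor k\rfloor+1}$ is deterministic in $x$ and therefore yields the same verdict at insertion and query time. Your version is somewhat more explicit (monotonicity of the bit array, the case distinction, and the requirement that $H_r$ be the same function in both operations), but the key idea is identical.
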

\begin{proof}
    The given \ac{rbf} with $k$ normal hash functions and one probabilistically activated hash function $H_{\lfloor k\rfloor+1}$ has at least as many \texttt{1}-bits in the filter as the normal \ac{bf} $BF$ with $\lfloor k\rfloor$  hash functions
    as the probabilistically additionally activated hash function $H_{\lfloor k\rfloor+1}(x)$ can only set additional \ac{bf} slots to one. 
    A false negative would occur if and only if, during a membership query for an element in the set, the hash values would expect a \texttt{1} in the \ac{bf} where there is no actual \texttt{1}. For all hash functions $H_0$ to $H_{\lfloor k\rfloor}$, this cannot be the case by definition of the standard \ac{bf}. For the probabilistically activated hash function $H_{\lfloor k\rfloor+1}$, this would only be true if it was not activated during insertion but is activated during the query. 
    As the activation is deterministic with respect to the input -- due to the creation through a hash of the input -- an activation in the query without activation during insertion is impossible. Therefore, we can state that the \ac{rbf} has no false negatives.
\end{proof}
\begin{theorem}
    The false positive rate $p_{\text{FP}}^{\text{RBF}}$ of the \ac{rbf} is smaller or equal to the false positive rate of a normal \ac{bf} $p_{\text{FP}}^{\text{BF}}$: $p_{\text{FP}}^{\text{RBF}} \leq p_{\text{FP}}^{\text{BF}}$
\end{theorem}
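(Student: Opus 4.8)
The plan is to reduce the statement to the single real-valued false-positive function
\[
f(k) \;=\; \left(1 - e^{-kn/m}\right)^{k}
\]
of Eqn.~\ref{eq:fp_rate}, and then to exploit that the \ac{rbf} may evaluate $f$ at its real-valued minimiser whereas an ordinary \ac{bf} may not. Concretely, I would fix $m$ and $n$, let the \ac{rbf} use the optimal rational count $k^{*} = \frac{m}{n}\ln 2$ of Eqn.~\ref{eq:optimalk}, and let the comparison \ac{bf} use whatever integer $k' \in \mathbb{N}$ it is forced to round to. The target inequality then reduces to $f(k^{*}) = \min_{k \in \mathbb{R}^{+}} f(k) \le f(k')$.

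First I would compute the false-positive rate of the \ac{rbf} directly from its definition. Insertion applies $\lfloor k \rfloor$ hash functions to every element and the probabilistically activated $H_{\lfloor k\rfloor+1}$ to an expected fraction $p_{activation} = k - \lfloor k \rfloor$ of them, so the expected number of set bits per element is $k$ and, under the independence assumption behind Eqn.~\ref{eq:fp_rate}, the fraction of ones is $1-p = 1 - e^{-kn/m}$ with $k$ now rational. A queried non-member checks $\lfloor k \rfloor$ positions when its (deterministic) activation bit is off and $\lfloor k \rfloor + 1$ positions when it is on, the latter for a fraction $p_{activation}$ of queries, so that
\[
p_{\text{FP}}^{\text{RBF}} \;=\; (1 - p_{activation})\,(1-p)^{\lfloor k \rfloor} + p_{activation}\,(1-p)^{\lfloor k \rfloor + 1}.
\]
The second step is to identify this expression with $f(k)$, so that a single real parameter $k$ governs the rate; the third step is to invoke Eqn.~\ref{eq:optimalk}: since $f$ is smooth and unimodal with its unique minimum at the generally non-integer $k^{*}$, every integer argument available to the ordinary filter returns a value of at least $f(k^{*})$.

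The hard part will be the second step, namely justifying that the discrete probabilistic scheme genuinely realises the continuous rate $f(k)$ at non-integer $k$. This is delicate because the per-query accounting above is a convex combination of the two neighbouring integer exponents, and convexity of $t \mapsto (1-p)^{t}$ (its chord lies above the curve) yields only $p_{\text{FP}}^{\text{RBF}} \ge f(k)$: the closed form underestimates the true rate, so optimality of $k^{*}$ does not by itself close the gap. A second complication is that the ordinary filter it is compared against fills a different fraction $1-p$ of bits at an integer count, so the two rates do not share a common base. I would therefore run the argument inside the same independence-and-exponential approximation already used for Eqn.~\ref{eq:fp_rate}, under which the effective hash count behaves as the real number $k$ and $p_{\text{FP}}^{\text{RBF}} = f(k)$ holds; the minimisation half, $f(k^{*}) \le f(k')$, is then immediate. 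Pinning down the regime in which this probabilistic-to-continuous identification is exact rather than merely first-order is, to my mind, the genuine obstacle, whereas the optimisation step is routine.
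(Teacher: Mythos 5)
Your overall strategy is the same as the paper's: fix $m$ and $n$, let the RBF run at the real minimiser $k^{*}=\frac{m}{n}\ln 2$ of $f(k)=\left(1-e^{-kn/m}\right)^{k}$, and conclude $f(k^{*})\le f(k')$ for every integer $k'$ available to the ordinary filter. The paper's proof does exactly this via a three-way case split on $k_{\text{BF}}$ versus $k_{\text{RBF}}=k^{*}$, and, like you, it needs the identification $p_{\text{FP}}^{\text{RBF}}=f(k)$ at non-integer $k$. The difference is that you actually write down what the probabilistic activation delivers,
\[
p_{\text{FP}}^{\text{RBF}}=(1-p_{activation})\,q^{\lfloor k\rfloor}+p_{activation}\,q^{\lfloor k\rfloor+1},\qquad q=1-e^{-kn/m},
\]
and correctly observe that convexity of $t\mapsto q^{t}$ places this chord \emph{above} $q^{k}$, so the continuous formula underestimates the true rate. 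The paper's proof never confronts this; it silently substitutes $f(k)$ for the RBF's rate in all three cases.

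The obstacle you flagged is not something you can wave away by "working inside the same approximation regime": it is fatal to the statement as written, because the chord penalty is not small compared to the optimisation gain. Since $f'(k^{*})=0$, the benefit of sitting at $k^{*}$ rather than at a neighbouring integer is only second order in the distance to that integer, while the convexity excess $q^{\lfloor k\rfloor}\bigl((1-p_{activation})+p_{activation}q-q^{p_{activation}}\bigr)$ is a fixed positive quantity of comparable or larger size. Concretely, take the paper's own example $m=131072$, $n=60000$, so $k^{*}\approx 1.514$ and $q=\tfrac12$: the RBF's rate is $0.486\cdot\tfrac12+0.514\cdot\tfrac14\approx 0.3715$, whereas the ordinary BF gives $\approx 0.3596$ at $k'=2$ and $\approx 0.3673$ at $k'=1$ --- both integers beat the RBF. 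This is precisely the artifact the paper itself reports experimentally (local minima of the measured false positive rate at the neighbouring integers rather than at $k^{*}$), so it is not a hashing anomaly but a consequence of the convex combination you identified. Your step two therefore cannot be completed, and neither your argument nor the paper's establishes the theorem unless $p_{\text{FP}}^{\text{RBF}}$ is redefined to mean the idealized quantity $f(k)$ rather than the rate the algorithm actually achieves, or the claim is weakened to a comparison against a BF whose integer $k'$ is forced to be sufficiently far from $k^{*}$.
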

\begin{proof}
    As described in Eqn. \ref{eq:optimalk}, the optimal number of hash functions $k^*$ is solely determined by the number of entries $n$ and the length of the \ac{bf} $m$ with the assumption that the highest information can be stored for a fraction of zero $\mathit{foz}_{opt}=\frac{1}{2}$. For one standard \ac{bf} with $k_{BF}$ hash functions and one \ac{rbf}, we further assume that we choose the number of hash functions of the \ac{rbf} $k_{\text{RBF}}$ to be optimal, $k_{\text{RBF}}=k^*$.  Then we can differentiate three cases: \\
    Case a) $k_{\text{BF}}>k_{\text{RBF}}$: For the normal \ac{bf}, there are more than the optimal $k^*$ hash functions chosen. Therefore, the $\mathit{foz}$ is smaller than in the optimal case and the false positive rate will increase as it is more likely to accidentally hit a \texttt{1} in the filter $\implies p_{\text{FP}}^{\text{RBF}}<p_{\text{FP}}^{\text{BF}}$.\\
    Case b) $k_{\text{BF}}<k_{\text{RBF}}$: For a smaller than optimal number of hash functions, the filter will become less full. At the same time, each sample is hashed by less-than-optimal hash functions. In this case, with the exact formula for the false positive rate $p_{\text{FP}}=\left(1-\left(1-\frac{1}{m}\right)^{kn}\right)^{k}$$\implies p_{\text{FP}}^{\text{RBF}} \le p_{\text{FP}}^{\text{BF}}$. \\
    Case c) $k_{\text{BF}}=k_{\text{RBF}}$: In this case, the optimal $k^*$ is an integer. Therefore, $BF=RBF$ and consequently $\implies p_{\text{FP}}^{\text{RBF}} = p_{\text{FP}}^{\text{BF}}$.
\end{proof}

\section{Variably-Sized Block Bloom Filters}
\label{sec:methods:sub:blockedbloom}

To enable the efficient computation of hash functions for $m \neq 2^c, c \in \mathbb{N}_0$, we propose the Variably-Sized \acp{bf} (VSBF). A VSBF still makes use of the easy computation of modulo with the binary \& by restricting the hash values only to subsets of the whole \ac{bf} slots. This allows for several advantages compared to standard \acp{bf}: the corresponding \ac{bf} can easily be adapted to a variable size without compromising hashing speed, sub-parts of the filter can directly be used as compressed versions of the stored set, and the false positives due to overlapping hash values are more evenly distributed between all elements. 

\begin{definition}
    A \textbf{Variably-Sized Bloom Filter} (VSBF) is a \ac{bf} of length $m_{\text{BF}}$, with $2^c<m_{\text{BF}}<2^{c+1}, c\in \mathbb{N}_0$. 
    \label{def:variable_sized_bf}
\end{definition}

Consequently, the VSBFs filter length $m_{\text{BF}}$ is unequal to a power-of-two. If we number each slot of the VSBF from $0$ to $m_{\text{BF}}-1$ we can also represent this set of indices $I_{\text{BF}}$ with $J$ non-overlapping subsets $I^j_{\text{BF}}$ each having a length $m_j, j\in \{1, \dots, J\}$. We thereby restrict the $m_j$ to be a power of two.
\begin{lemma}
    A random choice of $m_{\text{BF}}$ can be decomposed into summands $m_j$ for which $m_j =2^c$ with $c\in \mathbb{N}$ such that $m_{\text{BF}}=\sum_j m_j$, with $m_{\text{BF}}, m_j \in \mathbb{N},  j\in \{1, \dots, J\}$.
\end{lemma}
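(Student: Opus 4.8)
The plan is to observe that the claimed decomposition is nothing other than the binary (base-2) expansion of the integer $m_{\text{BF}}$: every natural number is uniquely writable as a sum of distinct powers of two, and the powers appearing with coefficient \texttt{1} are exactly the summands $m_j$. I would therefore either cite the existence of binary representations directly or, for a self-contained paper, give a short argument by strong induction on $m_{\text{BF}}$, since the statement is really a restatement of this elementary fact.

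For the inductive argument, the base case $m_{\text{BF}}=1=2^0$ is handled by taking $J=1$ and $m_1=1$. For the inductive step I would assume every natural number strictly smaller than $m_{\text{BF}}$ admits such a decomposition, let $c$ be the largest exponent with $2^c \le m_{\text{BF}}$ (equivalently the exponent $c$ from Definition~\ref{def:variable_sized_bf} when $m_{\text{BF}}$ is not itself a power of two), set one summand equal to $2^c$, and apply the inductive hypothesis to the residue $r = m_{\text{BF}} - 2^c$. The key estimate is $0 \le r < 2^c \le m_{\text{BF}}$, so $r$ is a strictly smaller natural number whose guaranteed decomposition, together with the term $2^c$, yields the decomposition of $m_{\text{BF}}$. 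Since $r < 2^c$, every power produced for $r$ is strictly below $2^c$, which incidentally recovers the distinctness of the binary expansion, although the lemma only requires a sum, not distinct summands.

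There is essentially no hard step here; the result is elementary and the main work is exposition rather than obstacle. The only points worth stating carefully are the termination and range control of the greedy choice -- that a largest admissible power of two always exists and leaves a residue within the valid range, guaranteeing the induction strictly descends -- and a brief remark on the exponent convention, namely that the smallest admissible power is $2^0 = 1$ so that odd values of $m_{\text{BF}}$ remain representable (i.e. reading $c \in \mathbb{N}$ here as permitting $c = 0$, consistently with the $\mathbb{N}_0$ used in Definition~\ref{def:variable_sized_bf}).
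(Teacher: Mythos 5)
Your proposal is correct and takes the same route as the paper, which simply observes that the decomposition is the binary representation of $m_{\text{BF}}$ (illustrated there with the example $m_{\text{BF}}=11$); your strong-induction argument just makes that elementary fact explicit. Your remark that $c=0$ must be admissible (reading $c\in\mathbb{N}$ as $\mathbb{N}_0$) is a fair catch, since otherwise odd lengths would not be representable.
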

\begin{proof}
    This is directly based on the fact that every integer can be represented in binary. A visualization of an example for $m_{\text{BF}}=11$ is given in Figure \ref{fig:variablesizedbloom}.
\end{proof}
\begin{figure}
    \centering
    \includegraphics[width=0.5\textwidth, trim=0 17.65cm 25.345cm 0, clip]{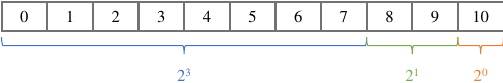}
    \caption{\ac{bf} with indexed slots $I_{\text{BF}}=\{0, ..., 10\}$ and how three subsets, each of length, can represent it, $2^{c_j}$: $I_{\text{BF}}^1=\{0, 1, 2, 3, 4, 5, 6, 7\}$, $I_{\text{BF}}^2=\{8, 9\}$, $I_{\text{BF}}^3=\{10\}$.}
    \label{fig:variablesizedbloom}
\end{figure}
\begin{definition}
    Given a \ac{bf} being decomposed into sets of indexed slots $I^j_{\text{BF}}$, we can define hash functions $H_j(x)$, which only map onto their respective subset $I^j_{\text{BF}}$ each. We call those \textbf{subset hash functions}. 
\end{definition}
These subset hash functions can be calculated efficiently with modulo and addition operations if the subset lengths $m_j$ in the VSBF are defined to be powers-of-two: $H_j(x)=h(x)\%m_j+\min\left(I_{\text{BF}}^j\right)$.
\begin{definition}
    The \textbf{\acf{vsbbf}} is a specification of the previously defined VSBF (compare Definition \ref{def:variable_sized_bf}) where the actual filter of length $m_{\text{BF}}$ is subdivided into $J$ blocks of sizes $m_j, j\in \{1,\dots,J\}$ with $m_j>m_{j+1}$ and $m_j=2^c, c\in\mathbb{N}$. Each block is denoted by a set of indices $I^j$ and is filled by $k_j$ subset hash functions $H_j^i(x)$, which only map to the corresponding indices $I^j$. $k_j$ is thereby the optimal number of hash functions for the respective filter block and might be rational.
\end{definition}
The idea behind this \ac{vsbbf} is visualized in Figure \ref{fig:blockedbloom}. There the \ac{bf} of length $m_{\text{BF}}=25$ is subdivided into three subsets of length $m_1=2^4$, $ m_2=2^3$, and $m_3=2^0$.
\begin{figure}
    \centering
    \includegraphics[width=0.5\columnwidth, trim=0 14.26cm 25.345cm 0, clip]{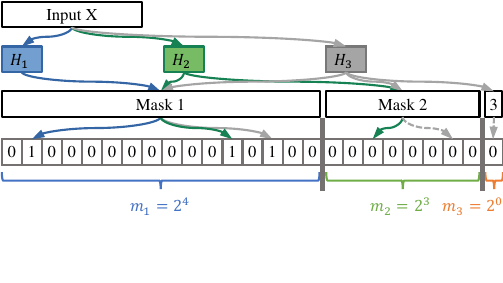}
    \caption{Visualization of a \ac{vsbbf} of length $m_{\text{BF}}=25$ with $m_1=2^4=16$, $m_2=2^3=8$ and $m_3=2^0=1$.}
    \label{fig:blockedbloom}
\end{figure}

The implementation of the \ac{vsbbf} consists of three parts: The calculation of optimal BF blocks including the corresponding number of hash functions~$k_j^*$, the insertion of elements in the proposed data structure, and the query of set property for given elements from the proposed data structure. For the calculation of the Block sizes $m_j$ a simple restructuring of the binary representation of the overall \ac{bf} size $m_{\text{BF}}$ is possible: With this representation being denoted as~$(m_{\text{BF}})_2$ with~$m_i, i\in \{\lfloor log_2(m_{\text{BF}})\rfloor,...,0\}$ denoting the binary digits, the desired \ac{bf} sizes are $\{2^{m_i}\} \forall i| m_i=1$. The corresponding number of hash functions for each block is then denoted as~$k_j^*=\frac{m_j}{n}\ln{2}$ and can be calculated dynamically.

Compared to the standard with one long \ac{bf}, the Block \ac{bf} needs to map the hash values to various filter lengths $m_i$.
This can be efficiently done, as all subparts of the filter are still of size $2^c, c\in \mathbb{N}$, and thus we can apply the efficient modulo operation described in Eqn.~\ref{eq:modulo}.
For a non-integer optimal number of hash functions $k_j$, we apply the principles of \acp{rbf} as described in Section~\ref{sec:methods:sub:rationalbloom}. The whole algorithm for insertion is given in Algorithm~\ref{alg:blockbloom_insert}. For simplicity, this algorithm does not consider rational numbers of hash functions. Querying an element is similar to insertion. Once the hash values are obtained, checking for \texttt{1}'s works similarly to a standard BF. 
\begin{algorithm}
\footnotesize
\caption{Inserting an Element in the \ac{vsbbf}}
\label{alg:blockbloom_insert}
\KwData{Element $x$, Total elements to insert $n$, \ac{vsbbf} $BF$ of total length $m_{\text{BF}}$, Two uniform, pairwise independent hash functions $h_1(x), h_2(x)$}
\KwResult{Inserted element $x$ into the \ac{vsbbf}}
binary $\gets$ BinaryRepresentation($m_{\text{BF}}$)\;
length $\gets$ Length(binary)\;
hash\_values $\gets h_1(x), h_2(x)$\;
offset $\gets 0$\;
\For{j $\gets$ 0 \KwTo length - 1}{
    \If{binary[$j$] == 1}{
        $m_j \gets 2^{\text{length} - j - 1}$\;
        $k_j \gets \frac{m_j}{n} \cdot \ln{2}$\;
        \For{$i \gets 0$ \KwTo $k_j$}{
            $H_i(x) = \left( (hash\_values_1 + (i + m_j) \cdot hash\_values_2) \& (m_j - 1) \right) + \text{offset}$\;
            $BF[H_i(x)] \gets 1$\;
            offset += $m_j$\;
        }
    }
}
\end{algorithm}

\begin{lemma}
    The \ac{vsbbf} has no false negatives.
\end{lemma}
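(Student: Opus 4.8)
The plan is to reduce the claim to two facts that are either immediate or already established: first, that the \ac{vsbbf} is nothing more than a disjoint concatenation of independent (possibly rational) \acp{bf}, one per block; and second, that each such block inherits the no-false-negative guarantee. The guiding principle is the standard observation that a \ac{bf} produces a false negative only if a membership query for a previously inserted element encounters a \texttt{0} at some index that the query expects to be \texttt{1}; since bits are only ever set and never cleared, this can never happen as long as insertion and query consult exactly the same indices.

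First I would make the block structure explicit. By the preceding decomposition, the index set $I_{\text{BF}}$ is partitioned into $J$ pairwise disjoint blocks $I^j$ of power-of-two size $m_j$, and each subset hash function $H^i_j(x)$ maps deterministically into $I^j$ and into no other block. Consequently, inserting $x$ sets exactly the bits $\{H^i_j(x)\}$ within each block, and querying $x$ checks exactly those same bits. Because the blocks are disjoint, no bit set on behalf of one block can be disturbed by another, so it suffices to argue no false negatives block by block.

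Next I would dispatch the integer case: for a block whose $k_j$ is an integer, the subset hash functions are applied identically on insertion and on query, and since every bit they denote was set to \texttt{1} during insertion (and monotonicity forbids resetting), every such bit is still \texttt{1} at query time. For a block with a rational $k_j$, the only additional index comes from a probabilistically activated hash function. Here I would invoke Lemma \ref{lemma:rational_no_fn}: the activation decision is computed from a hash of $x$ and is therefore deterministic in $x$, so the probabilistic hash function is activated during the query if and only if it was activated during insertion. Hence no bit expected by the query can fail to have been set.

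Finally, I would verify the one implementation-level subtlety that could break determinism: Algorithm \ref{alg:blockbloom_insert} computes the block offsets and the double-hashed indices $H_i(x)$ from $h_1(x)$ and $h_2(x)$, so the query procedure must reproduce the identical offset bookkeeping and index formula; given that it does (querying mirrors insertion), the deterministic correspondence between inserted and queried indices holds across all blocks. The main obstacle I anticipate is not in the combinatorial argument but in pinning down this insertion--query symmetry precisely, including the rational-activation step, rather than leaving it implicit.
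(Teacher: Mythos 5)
Your proposal is correct and follows essentially the same route as the paper: decompose the \ac{vsbbf} into its disjoint blocks, observe that each block is a standard or rational \ac{bf}, and invoke Lemma~\ref{lemma:rational_no_fn} for the rational case. You spell out the insertion--query symmetry and the block disjointness more explicitly than the paper does, but the underlying argument is the same.
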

\begin{proof}
    As using a \ac{vsbbf} is equal to using several normal \acp{bf} of various sizes for the same set, the proof boils down to none of the filter blocks allowing for false negatives. 
    By definition, the filter blocks are all either normal or \acp{rbf}, and thus, it follows from Lemma \ref{lemma:rational_no_fn} that the Block \ac{bf} has no false negatives.
\end{proof}

The false positive rate of the \ac{vsbbf} can then be calculated with the chain rule:
\begin{theorem}
    The false positive rate of the \ac{vsbbf} is:
    \begin{equation}
        p_{\text{FP}}^{\text{Block}} =\prod_j p_{\text{FP}}^j\approx \prod_j \left(1-e^{-k_jn/m_j}\right)^{k_j}
    \end{equation}
    Where $k_j$ denotes the number of hash functions and $m_j$ is the length of the $j$-th subset \ac{bf}. 
\end{theorem}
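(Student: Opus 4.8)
The plan is to decompose the false-positive event for the whole \ac{vsbbf} into per-block events and then argue that these events are independent, so that their joint probability factorises into the stated product. I would begin by fixing a non-member query element $q$ and recalling that the membership test reports \texttt{true} exactly when every bit inspected by $q$ is set to \texttt{1}. Because each subset hash function $H_j^i$ maps only into the disjoint index set $I^j$ of block $j$, the bits inspected by $q$ partition cleanly across the $J$ blocks. Hence, defining $A_j$ to be the event ``all $k_j$ positions that $q$ checks inside block $j$ are set to \texttt{1}'', the global false-positive event is exactly $\bigcap_{j} A_j$.

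The second step is to establish that the events $A_1, \dots, A_J$ are mutually independent. Each $A_j$ is a function only of the hash values that land in block $j$, namely $\{H_j^i(e) : e \in S\}$ for the inserted set $S$ together with $\{H_j^i(q)\}$. Under the standard idealised model in which the hash functions are uniform and, across blocks, mutually independent, the collections of hash values belonging to different blocks are independent, and therefore so are the $A_j$. This is the step I expect to be the main obstacle: the insertion procedure uses the \emph{same} set $S$ for every block, so one must be careful that the correlation introduced by re-using the elements is broken by the independence of the hash functions rather than by any property of $S$. Moreover, the concrete Algorithm~\ref{alg:blockbloom_insert} derives all block hash functions from the two base hashes $h_1, h_2$ via the Kirsch--Mitzenmacher trick, so strictly speaking the factorisation holds under the usual idealisation that these derived functions behave as independent uniform hashes.

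Given independence, the third step is immediate: $p_{\text{FP}}^{\text{Block}} = P\!\left(\bigcap_j A_j\right) = \prod_j P(A_j) = \prod_j p_{\text{FP}}^j$. Finally I would identify each factor $p_{\text{FP}}^j = P(A_j)$ with the false-positive rate of a stand-alone \ac{bf} of length $m_j$ holding $n$ elements with $k_j$ (possibly rational) hash functions. For integer $k_j$ this is exactly Eqn.~\ref{eq:fp_rate}, giving $p_{\text{FP}}^j = \bigl(1-(1-1/m_j)^{k_j n}\bigr)^{k_j} \approx (1-e^{-k_j n/m_j})^{k_j}$; for rational $k_j$ the same expression holds with $k_j$ interpreted via the \ac{rbf} construction of Section~\ref{sec:methods:sub:rationalbloom}, whose false-positive rate was already analysed there. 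Substituting these per-block rates into the product yields the claimed formula, completing the argument.
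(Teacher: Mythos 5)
Your proposal follows essentially the same route as the paper: identify the global false-positive event with the conjunction of per-block false-positive events (possible because the subset hash functions map into disjoint index sets) and multiply the per-block rates. You are in fact more careful than the paper's own proof, which asserts the product form without explicitly invoking the independence of the blocks or flagging that the Kirsch--Mitzenmacher derivation of all block hashes from $h_1, h_2$ makes this an idealised-hash assumption rather than a literal consequence of the construction.
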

\begin{proof}
    In general, there are four classes of results for a set query: a true positive or true negative answer denotes the right functioning of the \ac{bf}. False negatives are not possible by construction (compare Section \ref{sec:properties}), and a false positive happens if a sample is described as being part of the set which was actually not inserted in the \ac{bf}. 
    For several \acp{bf} representing the same set, which is similar to hash functions only mapping to subsets of the full \ac{bf}, deciding whether a queried sample is in the desired set is always a combination of the comparison of values for all applied hash functions. An item can only be wrongly denoted to be in the set if all hash functions for all filters point wrongly to a \texttt{1} element. This is equal to all subset \acp{bf} wrongly denoting the element to be in the set. 
\end{proof}

\begin{corollary}
    For the right choice of $k_j$, the false positive rate of the combined filters of length $m_j=2^{c_j}$ stays the same as the single \ac{bf} of size $m_{\text{BF}}$. This means:
    \begin{equation}
        p_{\text{FP}}^{\text{Block}}\approx \prod_j \left(1-e^{-k_jn/m_j}\right)^{k_j}=\left(1-e^{-kn/m_{\text{BF}}}\right)^k \approx p_{\text{FP}}^{\text{BF}}.
    \end{equation}
\end{corollary}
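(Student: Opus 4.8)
The plan is to exploit the fact that the ``right choice'' of $k_j$ is the per-block optimum $k_j^\ast = \frac{m_j}{n}\ln 2$ stated in the definition of the \ac{vsbbf}, which forces every block to sit exactly at the $\mathit{foz}=\tfrac12$ operating point. First I would substitute $k_j = \frac{m_j}{n}\ln 2$ into the block factor from the preceding theorem: this gives $k_j n/m_j = \ln 2$, hence $e^{-k_j n/m_j} = e^{-\ln 2} = \tfrac12$, so each factor collapses to $\bigl(1-\tfrac12\bigr)^{k_j} = 2^{-k_j}$. The product over all blocks then becomes $\prod_j 2^{-k_j} = 2^{-\sum_j k_j}$, reducing the whole argument to a statement about the \emph{sum} of the per-block hash-function counts.

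The second step is the key identity $\sum_j k_j = k$. Using $k_j = \frac{m_j}{n}\ln 2$ and pulling the constant $\frac{\ln 2}{n}$ out of the sum, I would write $\sum_j k_j = \frac{\ln 2}{n}\sum_j m_j$, and then invoke the decomposition lemma ($m_{\text{BF}} = \sum_j m_j$) to replace $\sum_j m_j$ by $m_{\text{BF}}$. This yields $\sum_j k_j = \frac{m_{\text{BF}}}{n}\ln 2 = k$, i.e.\ exactly the optimal hash-function count for the full filter. Substituting back gives $p_{\text{FP}}^{\text{Block}} \approx 2^{-k}$.

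Finally I would evaluate the right-hand side the same way: for the single \ac{bf} of size $m_{\text{BF}}$ at its optimum $k = \frac{m_{\text{BF}}}{n}\ln 2$ we again get $kn/m_{\text{BF}} = \ln 2$, so $\bigl(1-e^{-kn/m_{\text{BF}}}\bigr)^k = \bigl(1-\tfrac12\bigr)^k = 2^{-k} \approx p_{\text{FP}}^{\text{BF}}$. Both sides equal $2^{-k}$, which closes the chain of approximate equalities.

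The main obstacle is not the algebra but being honest about the ``$\approx$'' signs. The collapse to $2^{-k}$ is exact only under the standard exponential approximation $(1-1/m)^{kn}\approx e^{-kn/m}$; the underlying exact expressions $\bigl(1-(1-1/m_j)^{k_j n}\bigr)^{k_j}$ carry an approximation error that grows for the small blocks (those with tiny $m_j$), so the block product and the single-filter rate agree exactly at the idealized $\mathit{foz}=\tfrac12$ level but only approximately in the exact combinatorial model. I would therefore state the result as the advertised approximate equality and emphasize that the additivity $\sum_j m_j = m_{\text{BF}}$ together with the linearity of $k_j^\ast$ in $m_j$ is precisely what makes the product telescope to the single-filter expression.
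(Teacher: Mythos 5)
Your proof is correct and follows essentially the same route as the paper: substitute the per-block optimum $k_j^\ast=\frac{m_j}{n}\ln 2$ so that each factor collapses to $2^{-k_j}$, then use $\sum_j m_j = m_{\text{BF}}$ to telescope the product into the single-filter expression $2^{-k}$. Your added remark about where the ``$\approx$'' signs genuinely matter (the exact combinatorial formula for small blocks) is a worthwhile clarification the paper leaves implicit.
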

\begin{proof}
    With Eqn. \ref{eq:optimalk} for optimal $k^*_j=\frac{m_j}{n} \ln{2}$ and  $k^*=\frac{m_{\text{BF}}}{n} \ln{2}$:
    \begin{equation}
        \prod_j \left(1-e^{-\frac{m_j}{n}\ln 2 \frac{n}{m_j}}\right)^{\frac{m_j}{n}\ln 2}=\left(1-e^{-\frac{m_{\text{BF}}}{n}\ln 2 \frac{n}{m_{\text{BF}}}}\right)^{\frac{m_{\text{BF}}}{n}\ln 2}
    \end{equation}
    \begin{equation}
        \prod_j \left(1-e^{-\ln 2}\right)^{\frac{m_j}{n}\ln 2}=\left(1-e^{-\ln 2 }\right)^{\frac{m_{\text{BF}}}{n}\ln 2}\\
    \end{equation}
    and as $\sum_j m_j=m_{\text{BF}}$ by construction:
    \begin{equation}
        p_{\text{FP}}^{\text{Block}} = p_{\text{FP}}^{\text{BF}}
    \end{equation}
    for the choice of optimal $k_j$. 
\end{proof}
Here, the previous proposition of \acp{rbf} comes in handy, as a non-optimal choice of the number of hash functions due to integer rounding would introduce additional errors.

\begin{theorem}
    The Block \ac{bf} improves the equal distribution of false positives over the to-be-tested elements compared to a standard \ac{bf} of the same size. 
\end{theorem}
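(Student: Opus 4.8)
The plan is to formalize "equal distribution of false positives" as the variance, taken over the space of query elements $x$, of the individual false positive probability $p_{\text{FP}}(x)$, and then to show this variance is no larger for the \ac{vsbbf} than for a standard \ac{bf} of the same length $m_{\text{BF}}$ and the same total number of hash functions $k=\sum_j k_j$ (which, under the optimal allocation, coincides with the standard optimum $k^*=\frac{m_{\text{BF}}}{n}\ln 2$). First I would make precise where the unevenness originates: modelling each hash function as an independent uniform map, a query element $x$ not in the set is a false positive exactly when all of the filter cells it addresses are set. If $x$ addresses $d(x)$ \emph{distinct} cells, then $p_{\text{FP}}(x)\approx p_1^{\,d(x)}$, where $p_1=1-e^{-kn/m_{\text{BF}}}$ is the probability that a given cell is set. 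Because $p_1^{\,d}$ is monotone in $d$, the spread of $p_{\text{FP}}(x)$ across elements is governed by the spread of the distinct-cell count $d(x)$, whose variability is caused solely by intra-element hash collisions. Hence it suffices to compare the distribution of $d(x)$ in the two constructions.

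The key structural step is to observe that in the \ac{vsbbf} the subset hash functions of different blocks map into the disjoint index sets $I^j$, so two hash values of the same element belonging to different blocks can never collide; collisions are confined to within a single block. Writing $d(x)=\sum_j d_j(x)$ with $d_j(x)$ the number of distinct cells addressed in block $j$, independence of the blocks gives $\mathrm{Var}(d)=\sum_j \mathrm{Var}(d_j)$. I would then invoke the standard occupancy estimates: throwing $k_j$ balls into $m_j$ bins yields expected lost slots $\binom{k_j}{2}/m_j$ and, in the regime $k_j/m_j\ll 1$ relevant here, variance of the same order. Summing, the \ac{vsbbf} contributes $\sum_j \binom{k_j}{2}/m_j$ against the single-block value $\binom{k}{2}/m_{\text{BF}}$ of the standard filter.

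The heart of the argument is the inequality $\sum_j \binom{k_j}{2}/m_j \le \binom{k}{2}/m_{\text{BF}}$. Using the optimal proportional allocation $k_j=\frac{m_j}{n}\ln 2=\alpha m_j$ with $\alpha=\frac{\ln 2}{n}$ and $\sum_j m_j=m_{\text{BF}}$, each term simplifies to $\tfrac12(\alpha^2 m_j-\alpha)$, so the block sum equals $\tfrac12(\alpha^2 m_{\text{BF}}-\alpha J)$ while the standard value equals $\tfrac12(\alpha^2 m_{\text{BF}}-\alpha)$. Their difference is $\tfrac12\alpha(J-1)\ge 0$, vanishing only when $J=1$ (a power-of-two length). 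This shows the \ac{vsbbf} incurs strictly fewer expected collisions, so the distinct-cell count $d(x)$ is both closer to its maximum $k$ on average and less dispersed, and therefore the individual false positive probabilities are more tightly concentrated around $p_{\text{FP}}^{\text{BF}}$.

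The step I expect to be the main obstacle is the last bridge: turning the comparison of collision statistics into a clean, unconditional statement about $\mathrm{Var}(p_{\text{FP}}(x))$. The relation $p_{\text{FP}}(x)\approx p_1^{\,d(x)}$ is only a first-order approximation (the addressed cells are not perfectly independent, and the probabilistically activated hash functions of the \ac{rbf} blocks add a further layer of randomness), so a fully rigorous version would need either an exact second-moment computation using the per-block occupancy variance $\mathrm{Var}(D_j)=m_j(m_j-1)(1-2/m_j)^{k_j}+m_j(1-1/m_j)^{k_j}-m_j^2(1-1/m_j)^{2k_j}$, or a coupling/majorization argument showing that the block-wise distinct-count distribution is more concentrated than the single-array one. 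I would therefore present the collision-count inequality as the core quantitative result and then argue the monotone transfer to $p_{\text{FP}}(x)$ in the small-$\alpha$ regime that is the practically relevant setting.
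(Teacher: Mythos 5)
Your proposal follows essentially the same route as the paper: both reduce ``evenness of false positives'' to comparing the expected number of intra-element hash clashes, $\binom{k}{2}/m_{\text{BF}}$ for the monolithic filter versus $\sum_j \binom{k_j}{2}/m_j$ for the blocked one, exploiting the fact that hash functions of different blocks map to disjoint index sets and hence cannot collide. You actually go a step further than the paper, which merely asserts the final inequality ``with mathematic reformulation,'' whereas you carry out the computation explicitly (the difference is $\tfrac12\alpha(J-1)\ge 0$ with $\alpha=\ln 2/n$) and you are more candid about the remaining gap --- transferring the collision-count comparison into a rigorous statement about the spread of $p_{\text{FP}}(x)$ --- which the paper also leaves informal.
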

\begin{proof}
    In general, an element has a higher probability of being false positive if its footprint contains fewer \texttt{1}'s, as only the non-zero bits hold information, and it is more probable to have overlaps with an already inserted one then. We denote that different hash functions may produce equal hash indices for one input element as a \textit{clash}. 
    For a fully filled filter with $\mathit{foz}=0.5$, a new item $x_{\text{FP}}$ is denoted false positive if, for all $k$ hash functions, the value is one, although the item was originally not inserted in the filter. The false positive rate is then:
    \begin{equation}
        p_{\text{FP}}=0.5^k.
    \end{equation}
    If hash values of $o$ hash functions clash, the probability of a false positive is increased:
    \begin{equation}
        p_{\text{FP}}^{\text{clash}}=0.5^{(k-o)} > p_{\text{FP}}
    \end{equation}
    Thus, it is desirable to avoid overlaps in hash values of different hash functions for one element. 
    
    For a given Block \ac{bf} with blocks of length $m_j$ the optimal number of hash functions for each block is $k^*_j=\frac{m_j}{n} \ln{2}$ (compare Eqn. \ref{eq:optimalk}). So the sum of applied hash functions is equal to the optimal number $k^*$ for a filter of the summed lengths of the blocks $m=\sum_j m_j$:   
    \begin{equation}
        k_{\text{sum}}=\sum_j k^*_j=\sum_j \frac{m_j}{n} \ln{2}= \frac{\ln{2}}{n} \sum_j m_j = \frac{m}{n} \ln{2} \stackrel{\text{Eq.} \ref{eq:optimalk}}{=} k^*
    \end{equation}
    The probability for one element $x$ being inserted into a filter $BF$ with $k^*$ hash functions having less than $k^*$ hash values is equal to the probability of hash functions delivering the same hash values. For two uniformly distributed hash functions, the probability that they have the same value is $\frac{1}{m}$ if $m$ is the length of the \ac{bf}.
    For three hash functions, it is $\frac{1}{m}+\frac{2}{m}$. So, for k hash functions, it is 
    \begin{equation}
        p_{\text{clash}}=\sum_{i=1}^{k-1} \frac{i}{m}
        \label{eq:clash}
    \end{equation}
    In comparison, for a Block \ac{bf}, the clash probability is:
    \begin{equation}
    \begin{aligned} 
        p_{\text{clash}}^{\text{Block}}&=\sum_j \sum_{i=1}^{k_j-1} \frac{i}{m_j}
    \end{aligned}
    \label{eq:clash_blocked}
    \end{equation}
    With mathematic reformulation and the known properties, we can then show that for all $J>1$:
    \begin{equation}
    \begin{aligned}
        p_{\text{clash}} &> p_{\text{clash}}^{\text{Block}}.
    \end{aligned}
    \end{equation}
\end{proof}

\begin{corollary}
    The proposed solution requires no storage overhead for the description of the block sizes $m_j$ and the number of hash functions $k_j$.
\end{corollary}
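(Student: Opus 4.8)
The plan is to show that every quantity needed to describe the block structure—both the block sizes $m_j$ and the per-block hash-function counts $k_j$—is a deterministic function of the two parameters $m_{\text{BF}}$ and $n$ that any Bloom filter must already retain. A standard \ac{bf} of length $m_{\text{BF}}$ storing $n$ elements necessarily keeps these two numbers, so if the block description is recoverable from them alone, then nothing is paid beyond the standard parameter set. I would therefore frame the whole argument as establishing that the decomposition map $m_{\text{BF}} \mapsto \{(m_j, k_j)\}$ takes no input other than $m_{\text{BF}}$ and $n$ and is cheaply recomputable.

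First I would argue that the block sizes require no storage. By the binary-decomposition lemma, the blocks are exactly the powers of two appearing in the binary expansion of $m_{\text{BF}}$, listed in decreasing order as enforced by the constraints $m_j > m_{j+1}$ and $m_j = 2^c$. This representation is unique, so no disambiguating metadata is needed to recover the list. Concretely, as in Algorithm \ref{alg:blockbloom_insert}, one forms $(m_{\text{BF}})_2$ and reads off its set bits: a set bit at position $i$ yields a block of size $2^i$. Hence the entire collection $\{m_j\}$ is reconstructible from $m_{\text{BF}}$, which is already part of the filter's parameterization.

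Second I would show the hash-function counts are likewise free. By Eqn. \ref{eq:optimalk} applied blockwise, the optimal count is $k_j^* = \frac{m_j}{n}\ln 2$, a closed form in $m_j$ (already derived above) and $n$ (a standard parameter). Each $k_j$ can thus be recomputed on demand during insertion and query, exactly as Algorithm \ref{alg:blockbloom_insert} does inside its loop, and never needs to be materialized. Combining the two observations, the only persistent state beyond the filter bits is the pair $(m_{\text{BF}}, n)$, which is precisely what a standard \ac{bf} already maintains; therefore the \ac{vsbbf} introduces no storage overhead for describing the $m_j$ or the $k_j$.

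The result is essentially a bookkeeping argument rather than a technical one, so the main point to handle carefully is the \emph{meaning} of ``overhead.'' The claim is relative to a standard \ac{bf}, so I must establish not that these quantities are free in an absolute sense, but that they add nothing to the standard parameter set. This reduces to verifying the two recoverability facts above together with the uniqueness of the binary decomposition, which guarantees that no tie-breaking or ordering information needs to be stored alongside $m_{\text{BF}}$.
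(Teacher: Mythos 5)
Your proof is correct and follows essentially the same route as the paper: the block sizes $\{m_j\}$ are read off from the (unique) binary representation of $m_{\text{BF}}$ using the implicit large-to-small ordering, and each $k_j$ is recomputed on demand via $k_j^* = \frac{m_j}{n}\ln 2$ from the already-stored parameters $m_{\text{BF}}$ and $n$. Your explicit framing of ``overhead'' as relative to the standard \ac{bf}'s parameter set is a slightly more careful statement of what the paper leaves implicit, but the substance is identical.
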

\begin{proof}
    Additional information is the block sizes $m_j$ and the number of hash functions being used in each block $k_j$. 
    This information can be easily obtained from the information of the non-blocked representation and two implicit rules applied during construction. First, blocks are sorted from large to small with $m_{j+1}<m_j$. Second, the optimal number of hash functions $k_j^*$ is always chosen for each block, and if necessary, a \ac{rbf} (Section \ref{sec:methods:sub:rationalbloom}) is applied for the blocks to keep a controlled false positive rate. Hence, the block sizes $m_j$ are directly encoded in the binary representation of the whole \ac{bf} size $m_{\text{BF}}$, and the corresponding number of hash functions can be calculated with: 
    \begin{equation}
        k_j^*=\frac{m_j}{n}\ln{2}.
    \end{equation}

\end{proof}
\begin{conjecture}
    The proposed solution can save on processing time $t$, which means $t(BF)\geq t(BF^{\text{Block}})$ for insertion as well as querying for filter sizes $m_{\text{BF}}\neq 2^c, c \in \mathbb{N}_{0}$ and $\sum_j m_j^{\text{Block}}=m_{\text{BF}}$. 
\end{conjecture}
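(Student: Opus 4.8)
The plan is to introduce an explicit instruction-level cost model and then reduce the claimed inequality to a single per-hash-function comparison. First I would assign nonnegative costs to the primitive operations involved: $c_h$ for deriving one hash value via the Kirsch--Mitzenmacher trick (one multiplication and one addition), $c_{\text{mem}}$ for one memory access, $c_{\wedge}$ for a binary AND, $c_{+}$ for an integer addition, and $c_{\mathrm{mod}}$ for a general modulo by a non-power-of-two divisor. The only hardware assumption I need is the well-established inequality $c_{\mathrm{mod}} \geq c_{\wedge} + c_{+}$, which holds on essentially every real architecture because integer division is far more expensive than bit-masking followed by an addition.

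Next I would write the total insertion cost of the standard \ac{bf} as $t(BF) = k\,(c_h + c_{\mathrm{mod}} + c_{\text{mem}})$, since each of the $k$ hash functions requires deriving an index, reducing it modulo the non-power-of-two length $m_{\text{BF}}$, and setting one bit. For the \ac{vsbbf} I would write $t(BF^{\text{Block}}) = J\,c_{\text{loop}} + \sum_j k_j\,(c_h + c_{\wedge} + c_{+} + c_{\text{mem}})$, where each subset hash function replaces the expensive modulo by the cheap mask $\wedge\,(m_j-1)$ together with an offset addition, $c_{\text{loop}}$ is the fixed per-block bookkeeping cost, and the block sizes and the $k_j$ are precomputed once from $m_{\text{BF}}$ at construction time (cf. the preceding corollary on storage overhead) and hence contribute nothing per element. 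Invoking the earlier relation $\sum_j k_j = k^*$, the two sums over hash functions contain the \emph{same} number of terms, so subtracting gives $t(BF) - t(BF^{\text{Block}}) = k\,(c_{\mathrm{mod}} - c_{\wedge} - c_{+}) - J\,c_{\text{loop}}$.

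The final step is to bound this difference below by zero. Under the hardware assumption the first term is nonnegative and grows linearly in $k$, while the overhead is controlled by $J \leq \lfloor \log_2 m_{\text{BF}} \rfloor + 1$, the number of one-bits in the binary expansion of $m_{\text{BF}}$. Hence the inequality holds whenever the per-hash saving $k\,(c_{\mathrm{mod}} - c_{\wedge} - c_{+})$ dominates the at most logarithmically many block initialisations, which is precisely the regime of large filters motivating the construction. The identical argument applies to queries, where both structures perform the same index computations and one memory read per hash function, with early termination on encountering a \texttt{0} affecting both equally.

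I expect the main obstacle to be exactly the reason this statement is phrased as a conjecture rather than a theorem: the result is not unconditional but rests on the architecture-dependent assumption $c_{\mathrm{mod}} \geq c_{\wedge} + c_{+}$ and on the fixed per-block overhead $J\,c_{\text{loop}}$ not swamping the savings. Pinning down $c_{\text{loop}}$ rigorously, and accounting for the extra branch introduced by the probabilistically activated hash functions of the embedded \ac{rbf} blocks, are the delicate points. A clean proof would either absorb these into the cost model as lower-order terms or restrict the statement to the large-$k$ regime, and the residual gap is most honestly closed by empirical benchmarking.
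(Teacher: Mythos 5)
The paper does not actually prove this statement---it is deliberately left as a conjecture, followed only by the one-sentence sufficient condition that a general modulo by $m_{\text{BF}}\neq 2^c$ be more expensive than the $J$ bit-trick reductions of Eqn.~\ref{eq:modulo}. Your cost-model derivation is essentially a careful formalization of that same condition (sharpened by the observation that $\sum_j k_j = k^*$ makes the two hash loops the same length), and your closing caveat about the architecture-dependent assumption $c_{\mathrm{mod}}\geq c_{\wedge}+c_{+}$ and the per-block overhead is precisely why the paper states this as a conjecture rather than a theorem.
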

This is true if the runtime of one modulo calculation by $m_{\text{BF}}\neq 2^c$ is computationally more expensive than the computation of $J$ modulo operations with the binary bit trick (compare Eqn.~\ref{eq:modulo}).
$J$ thereby denotes the number of blocks in the corresponding Block \ac{bf} and can be computed by the hamming weight of the binary representation of the full filter length $(m_{\text{BF}})_2$ \cite{Reed.1954}.

A further advantage of this Block \ac{bf} is the easy and meaningful compression by simply taking subsets of the filter, which comprise a certain number of blocks. In tendency, this allows for more compression steps compared to the standard \ac{bf} of size $m_{\text{BF}}=2^c$ as not only halving the size is possible but also any combination of block sizes applied in the \ac{bf}. Available compression ratios are $\frac{\sum_i m_i}{m_{\text{BF}}}, \text{ where } i  \subseteq j$.

\section{Implementation Artifacts of the Proposed Approaches}
To verify our theoretical considerations, we implemented the \ac{rbf} and \ac{vsbbf} in C++ with a Murmur Hash using the hashing trick and efficient modulo calculation with binary arithmetic. As a baseline, we also implemented a standard \ac{bf} with the same efficiency measures.

Interestingly, implementing the \textit{\ac{rbf}} showed artifacts that were not expected and need further investigation. The false positive rate has local minima not at the calculated optimal rational value of hash functions, but especially at nearby integer values, as shown in Figure \ref{fig:rational_bloom_experiment}. For example, the false positive rate for a filter with $m=131,072$ and $n=60,000$ (brown line in the bottom graph) has its global minimum at $k=2$ and another local minimum at $k=1$, although the optimal value lies in between (brown cross).
\begin{figure}[t]
    \centering
    \includegraphics[width=\linewidth]{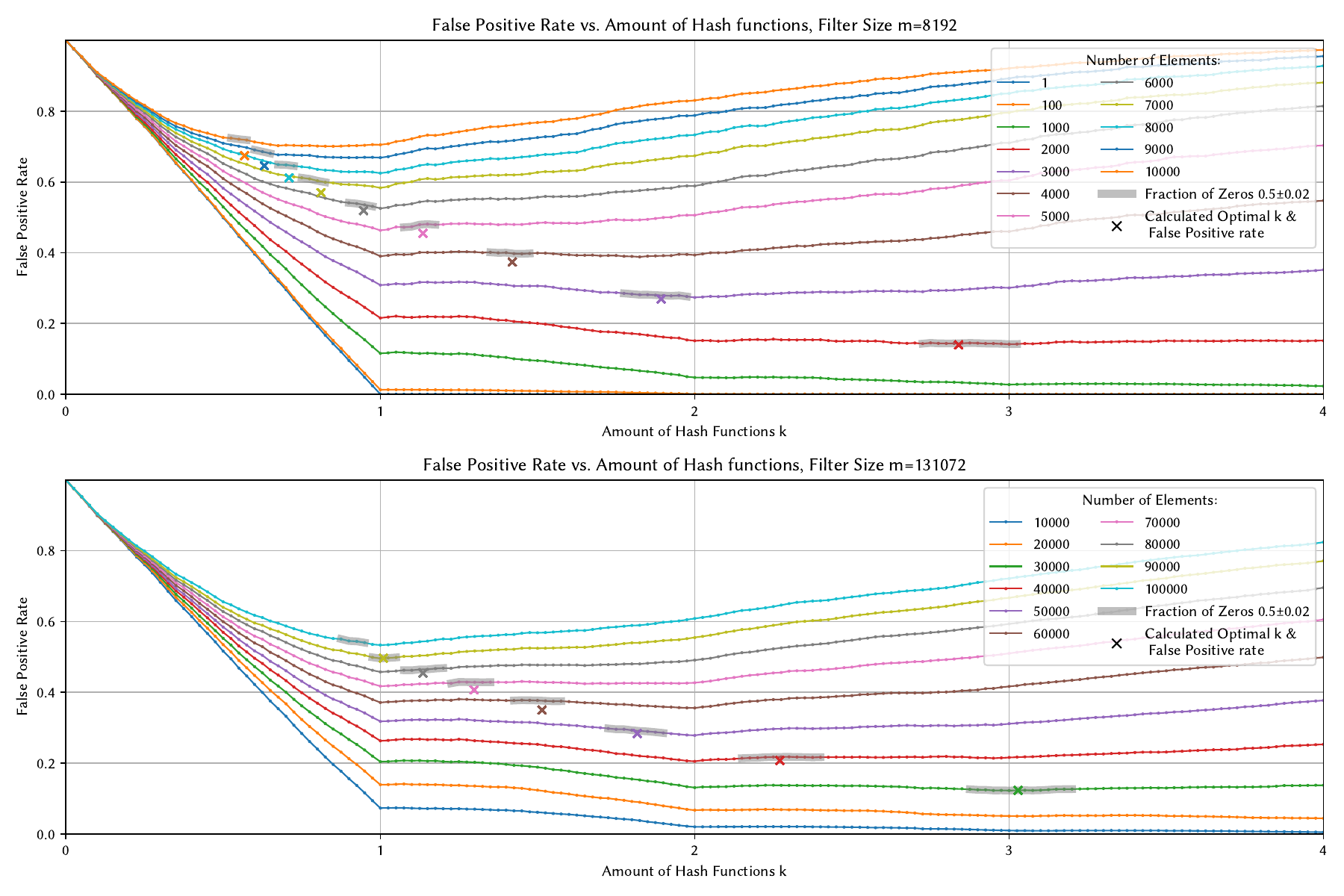}
    \caption{False positive rate of a \acp{rbf} with filter sizes 8,192 (top) and 131,072 bit (bottom) for various numbers of hash functions $k$ and number of elements inserted $n$. To obtain the false positive rate, the filter was queried for 10,000 elements that had not been inserted. Experiments with a fraction of zero of 0.5±0.02 are shadowed grey. Theoretically, calculated optimal configurations are denoted by a cross.}
    \label{fig:rational_bloom_experiment}
\end{figure}
This result is counterintuitive to the authors, and we cannot provide a satisfactory explanation for this issue. We believe this is likely due to the non-uniform behavior of the hashing algorithm. However, similar artifacts were also observed with other hash functions like SHA256, which should lead to a more uniform distribution of hash values. Additionally, larger filter sizes and numbers of elements inserted did not improve this behavior, although according to logical reasoning, this should reduce the effect of non-uniform hash functions. 
When comparing the fraction of zero $\mathit{foz}$ of the close-to-optimal \ac{rbf} to that of the \acp{bf} with an integer number of hash functions, we observe that the \ac{rbf} actually achieves better $\mathit{foz}$ rates. 
For the \textit{\ac{vsbbf}}, we achieve similar rates of false positives compared to a standard \ac{bf}. Improvements in insertion time could not be measured.

Although our practical experiments reveal that proofed theoretical gains could not be fully materialized in the real filters, we still think that the proposed approaches serve as a valuable contribution to \ac{bf} theory, as they allow new possibilities for applications. We see special benefits in future explorations of learned \ac{bf} approaches and filters with hash functions depending on inputs.

\section{Related Work}
To the best of the authors' knowledge, there were no previous approaches to rational numbers of hash functions $k$, while many methods improve the \textit{hashing methods}. 
The most direct way to hash for a \ac{bf} application are \textit{perfect hashing} schemes, where $k$ hash functions map directly from the input space to exactly $m$ hash buckets uniformly \cite{Gupta.2017}. In practice, most \ac{bf} approaches use popular hash functions, like Murmur Hash \cite{Appleby.2008}, with consecutive modulo operations to do so. 
To reduce computation for $k>2$ hash functions, techniques like \textit{double hashing} are used \cite{Gupta.2017}. Instead of calculating $k$ independent hashes, only two are calculated and combined like $h_i(x)=h_1(x)+f(i)h_2(x)$ \cite{Kirsch.2006}.
A \textit{partitioned hashing} is described in \cite{Gupta.2017} where every hash function only gets disconnected ranges of length $m/k$ as target space. They describe that the asymptotic performance stays the same, but due to a higher number of \texttt{1}'s, they might have a higher false positive rate than the standard \ac{bf}. 
Hao et al. partition the input element set and apply \textit{different uniform hash functions} for different groups of input elements $x$ \cite{Hao.2007} optimizing for the highest fraction of zeros in the corresponding \ac{bf}. Alternatively, Bruck et al. with their Weighted \ac{bf} assign more uniformly distributed hash functions to sets of elements that have a higher probability of being queried \cite{Bruck.2006}. Other approaches proposed using \textit{non-uniformly distributed hash functions} to allow for improved functionality, especially on hardware like GPUs, where floats can be assumed instead of bits for each \ac{bf} slot \cite{Werner.2015b}. Despite additional functionality, these approaches result in increased memory footprint and complexity in comparison to our \ac{rbf}, which only relaxes existing parameter constraints. 

Additionally, \acp{bf} are proposed which adapt their size to the number of to-be-stored elements. 
The Incremental \ac{bf} \cite{Hao.2008} introduces a fill bound, which determines a lower bound to the allowable fraction of zeros and incrementally adds an additional \ac{bf} as soon as all existing \acp{bf} reach this fill bound. Similarly, it is proposed to add \textit{additional plain \acp{bf}} of increasing size $m_i=m_0\cdot s^{l-1}$ if existing filters are full and applying a geometric progression on error bounds to keep the false positive rate and the amount of elements per filter constant  \cite{Almeida.2007}. Further, they use a slicing technique that denotes different areas in the \ac{bf} for every hash function to avoid overlaps of hash values from different hash functions for one element and achieve more uniformly distributed false positive rates \cite{Chang.2004,Almeida.2007,Bose.2008}. Other approaches are the \textit{Dynamic \ac{bf}} \cite{Guo.2010}, that additionally allows for the deletion of elements from the set and is adopted from Counting \acp{bf} \cite{Fan.1998,Bonomi.2006}, and the \textit{Block \ac{bf}} \cite{Putze.2009}, which consists of a set of small cache-line-sized \acp{bf}, where each element is then only inserted in one of these subfilters. 
Furthermore, the Combinatorial \ac{bf} (COMB) and Partitioned Combinatorial \ac{bf} (PCOMB) \cite{Hao.2009} were proposed to use a set of \acp{bf} to encode situations where one element belongs to several sets. In this approach, they also partition a \ac{bf} into smaller \texttt{1}'s. However, all sub-\acp{bf} are considered to be of similar size. Last, the \textit{learned} \ac{bf}, as proposed by Mitzenmacher, is another way to reduce the size of \acp{bf} by imitating them with a learned function, which allows for false negatives and using a very small \ac{bf} to filter out these false negatives~\cite{Mitzenmacher.2018}. 
Hence, while there are approaches that split the \ac{bf} into smaller pieces, they mainly do this to allow scalability or avoid unnecessary distributed RAM access. All approaches still consider \acp{bf}, with a size being a power-of-two.. Additionally, the subparts of the filter are mostly similar in size. 

Apart from these improvements on \acp{bf} themselves Quotient \cite{Bender.2012} and Cuckoo Filters \cite{Fan.2014} were proposed, which improve, e.g., on data locality and space efficiency and allow for deletions without recalculations. With that these variants outperform many \acp{bf}. Still, \acp{bf} variants are widely used in data management systems due to their simple architecture. Our approaches allow for improvements in these domain without a general change in system architecture. 


\section{Conclusion}

We introduced the \ac{rbf} with a proven lower false positive rate than a similarly sized standard \ac{bf} due to a more optimal choice of the number of used hash functions $k$. Although the deviation of the false positive rate with respect to $k$ is close to zero next to $k^*$, it still is a more optimal solution, especially in database environments where the number of elements to be inserted, $n$, and the filter size $m$ are fixed and given by the environment. Furthermore, it is a necessity for the following propositions of more complex \ac{bf} structures, which only allow a controllable false positive rate for an optimally chosen $k^*$. Our proposal directly improves on the current state of the art for \acp{bf} and incorporates previous improvements in hashing, like double-hashing. 

With \ac{rbf} as a basis for splitting the variable length \ac{bf} into smaller pieces with a power-of-two length, we proposed \ac{vsbbf},  which allows for variable filter sizes and efficient modulo operation simultaneously. Additionally, the \ac{vsbbf} has a better distribution of false positives over all elements in the universe, as a clash of hash functions is less probable. Further, the variable size has a special impact for large \acp{bf} as they appear if the data structure is used to store large sparse set information for efficient random access. 

We relax existing constraints on \acp{bf}, which are theoretically beneficial in scenarios where either small numbers of hash functions are used or large-sized \acp{bf} are applied. The probabilistic activation of hash functions as proposed by the \acp{rbf} opens new possible research directions: It needs to be investigated how learned approaches might use rational numbers of hash functions. This extends existing approaches on learned \acp{bf}, which only use learned pre-filters to increase \ac{bf} efficiency.




\section*{Acknowledgements}
This work is funded by the Deutsche Forschungsgemeinschaft (DFG, German Research Foundation) - 507196470.

\bibliographystyle{splncs04}
\bibliography{bibliography}

@article{Almeida.2007,
 author = {Almeida, Paulo S{\'e}rgio and Baquero, Carlos and Pregui{\c{c}}a, Nuno and Hutchison, David},
 year = {2007},
 title = {{Scalable Bloom Filters}},
 pages = {255--261},
 volume = {101},
 number = {6},
 issn = {00200190},
 journal = {{Information Processing Letters}},
 doi = {10.1016/j.ipl.2006.10.007}
}

@misc{Appleby.2008,
 abstract = {Automatically exported from code.google.com/p/smhasher - aappleby/smhasher},
 author = {Appleby, Austin},
 year = {2008},
 title = {{aappleby/smhasher: MurmurHash: GitHub Repository}},
 url = {https://github.com/aappleby/smhasher},
 urldate = {08.11.2024}
}

@article{Bender.2012,
 author = {Bender, Michael A. and Farach-Colton, Martin and Johnson, Rob and Kraner, Russell and Kuszmaul, Bradley C. and Medjedovic, Dzejla and Montes, Pablo and Shetty, Pradeep and Spillane, Richard P. and Zadok, Erez},
 year = {2012},
 title = {{Don't thrash: How to Cache your Hash on Flash}},
 pages = {1627--1637},
 volume = {5},
 number = {11},
 issn = {2150-8097},
 journal = {{Proceedings of the VLDB Endowment}},
 doi = {10.14778/2350229.2350275}
}

@article{Bloom.1970,
 author = {Bloom, Burton H.},
 year = {1970},
 title = {{Space/time trade-offs in hash coding with allowable errors}},
 pages = {422--426},
 volume = {13},
 number = {7},
 issn = {0001-0782},
 journal = {{Communications of the ACM}},
 doi = {10.1145/362686.362692}
}

@incollection{Bonomi.2006,
 author = {Bonomi, Flavio and Mitzenmacher, Michael and Panigrahy, Rina and Singh, Sushil and Varghese, George},
 title = {{An Improved Construction for Counting Bloom Filters}},
 pages = {684--695},
 volume = {4168},
 publisher = {{Springer Berlin Heidelberg}},
 isbn = {978-3-540-38876-0},
 series = {{Lecture Notes in Computer Science}},
 year = {2006},
 doi = {10.1007/11841036\_61}
}

@article{Bose.2008,
 author = {Bose, Prosenjit and Guo, Hua and Kranakis, Evangelos and Maheshwari, Anil and Morin, Pat and Morrison, Jason and Smid, Michiel and Tang, Yihui},
 year = {2008},
 title = {{On the false-positive rate of Bloom filters}},
 pages = {210--213},
 volume = {108},
 number = {4},
 issn = {00200190},
 journal = {{Inf. Process. Lett.}},
 doi = {10.1016/j.ipl.2008.05.018}
}

@inproceedings{Bruck.2006,
 author = {Bruck, Jehoshua and Gao, Jie and Jiang, Anxiao},
 title = {{Weighted Bloom Filter}},
 pages = {2304--2308},
 publisher = {{John Wiley}},
 isbn = {9781424405053},
 booktitle = {{2006 IEEE International Symposium on Information Theory}},
 year = {2006},
 doi = {10.1109/ISIT.2006.261978}
}

@inproceedings{Chang.2004,
 author = {Chang, F. and {Wu-chang Feng} and {Kang Li}},
 title = {{Approximate caches for packet classification}},
 pages = {2196--2207},
 publisher = {IEEE},
 booktitle = {{IEEE INFOCOM 2004}},
 year = {2004},
 doi = {10.1109/infcom.2004.1354643}
}

@article{DeCandia.2007,
 author = {DeCandia, Giuseppe and Hastorun, Deniz and Jampani, Madan and Kakulapati, Gunavardhan and Lakshman, Avinash and Pilchin, Alex and Sivasubramanian, Swaminathan and Vosshall, Peter and Vogels, Werner},
 year = {2007},
 title = {{Dynamo: amazon's highly available key-value store}},
 pages = {205--220},
 volume = {41},
 number = {6},
 issn = {0163-5980},
 journal = {{ACM SIGOPS Operating Systems Review}},
 doi = {10.1145/1323293.1294281}
}

@article{Estebanez.2014,
 author = {Est{\'e}banez, C{\'e}sar and Saez, Yago and Recio, Gustavo and Isasi, Pedro},
 year = {2014},
 title = {{Performance of the most common non-cryptographic hash functions}},
 pages = {681--698},
 volume = {44},
 number = {6},
 issn = {00380644},
 journal = {{Software: Practice and Experience}},
 doi = {10.1002/spe.2179}
}

@article{Fan.1998,
 author = {Fan, Li and Cao, Pei and Almeida, Jussara and Broder, Andrei Z.},
 year = {1998},
 title = {{Summary cache}},
 pages = {254--265},
 volume = {28},
 number = {4},
 issn = {0146-4833},
 journal = {{ACM SIGCOMM Computer Communication Review}},
 doi = {10.1145/285243.285287}
}

@incollection{Fan.2014,
 author = {Fan, Bin and Andersen, Dave G. and Kaminsky, Michael and Mitzenmacher, Michael D.},
 title = {{Cuckoo Filter}},
 pages = {75--88},
 publisher = {ACM},
 isbn = {978-1-4503-3279-8},
 booktitle = {{CoNEXT '14 : proceedings of the 2014 Conference on Emerging Networking Experiments and Technologies : December 2-5, 2014, Sydney, Australia}},
 year = {2014},
 doi = {10.1145/2674005.2674994}
}

@article{Guo.2010,
 author = {Guo, Deke and Wu, Jie and Chen, Honghui and Yuan, Ye and Luo, Xueshan},
 year = {2010},
 title = {{The Dynamic Bloom Filters}},
 pages = {120--133},
 volume = {22},
 number = {1},
 issn = {1041-4347},
 journal = {{IEEE Transactions on Knowledge and Data Engineering}},
 doi = {10.1109/TKDE.2009.57}
}

@inproceedings{Gupta.2017,
 author = {Gupta, Divya and Batra, Shalini},
 title = {{A short survey on bloom filter and its variants}},
 pages = {1086--1092},
 booktitle = {{2017 International Conference on Computing, Communication and Automation (ICCCA)}},
 year = {2017},
 doi = {10.1109/CCAA.2017.8229957}
}

@inproceedings{Hao.2007,
 author = {Hao, Fang and Kodialam, Murali and Lakshman, T. V.},
 title = {{Building high accuracy bloom filters using partitioned hashing}},
 pages = {277--288},
 publisher = {{ACM Digital Library}},
 isbn = {978-1-59593-639-4},
 booktitle = {{Proceedings of the 2007 ACM SIGMETRICS}},
 year = {2007},
 doi = {10.1145/1254882.1254916}
}

@inproceedings{Hao.2008,
 author = {Hao, F. and Kodialam, M. and Lakshman, T. V.},
 title = {{Incremental Bloom Filters}},
 pages = {1067--1075},
 publisher = {{IEEE Computer Society}},
 isbn = {978-1-4244-2025-4},
 booktitle = {{INFOCOM 2008. The 27th Conference on Computer Communications. IEEE}},
 year = {2008},
 doi = {10.1109/INFOCOM.2008.161}
}

@inproceedings{Hao.2009,
 author = {Hao, F. and Kodialam, M. and Lakshman, T. V. and Song, H.},
 title = {{Fast Multiset Membership Testing Using Combinatorial Bloom Filters}},
 pages = {513--521},
 booktitle = {{IEEE INFOCOM 2009}},
 year = {2009},
 doi = {10.1109/INFCOM.2009.5061957}
}

@incollection{Kirsch.2006,
 author = {Kirsch, Adam and Mitzenmacher, Michael},
 title = {{Less Hashing, Same Performance: Building a Better Bloom Filter}},
 pages = {456--467},
 volume = {4168},
 publisher = {{Springer Berlin Heidelberg}},
 isbn = {978-3-540-38876-0},
 series = {{Lecture Notes in Computer Science}},
 booktitle = {{Algorithms {u2013} ESA 2006}},
 year = {2006},
 doi = {10.1007/11841036\_42}
}

@inproceedings{Lu.2012,
 author = {Lu, Guanlin and Nam, Young Jin and Du, David H. C.},
 title = {{BloomStore: Bloom-Filter based memory-efficient key-value store for indexing of data deduplication on flash}},
 pages = {1--11},
 publisher = {IEEE},
 isbn = {978-1-4673-1747-4},
 booktitle = {{2012 IEEE 28th Symposium on Mass Storage Systems and Technologies}},
 year = {2012},
 doi = {10.1109/MSST.2012.6232390}
}

@inproceedings{Mitzenmacher.2018,
 author = {Mitzenmacher, Michael},
 title = {{A Model for Learned Bloom Filters and Optimizing by Sandwiching}},
 url = {https://proceedings.neurips.cc/paper_files/paper/2018/file/0f49c89d1e7298bb9930789c8ed59d48-Paper.pdf},
 volume = {31},
 publisher = {{Curran Associates}},
 booktitle = {{Advances in Neural Information Processing Systems}},
 year = {2018}
}

@article{Nayak.2019,
 author = {Nayak, Sabuzima and Patgiri, Ripon},
 year = {2019},
 title = {{A Review on Role of Bloom Filter on DNA Assembly}},
 pages = {66939--66954},
 volume = {7},
 journal = {{IEEE Access}},
 doi = {10.1109/ACCESS.2019.2910180}
}

@article{Putze.2009,
 author = {Putze, Felix and Sanders, Peter and Singler, Johannes},
 year = {2009},
 title = {{Cache-, hash-, and space-efficient bloom filters}},
 volume = {14},
 issn = {1084-6654},
 journal = {{ACM Journal of Experimental Algorithmics}},
 doi = {10.1145/1498698.1594230}
}

@article{Reed.1954,
 author = {Reed, I.},
 year = {1954},
 title = {{A class of multiple-error-correcting codes and the decoding scheme}},
 pages = {38--49},
 volume = {4},
 number = {4},
 issn = {2168-2690},
 journal = {{Transactions of the IRE Professional Group on Information Theory}},
 doi = {10.1109/tit.1954.1057465}
}

@inproceedings{Walther.2024,
 author = {Walther, Paul},
 title = {{Advancements of Randomized Data Structures for Geospatial Data}},
 url = {https://ceur-ws.org/Vol-3651/PhDW-1.pdf},
 booktitle = {{Proceedings of the Workshops of the EDBT/ICDT 2024}},
 year = {2024}
}

@incollection{Werner.2015b,
 author = {Werner, Martin and Sch{\"o}nfeld, Mirco},
 title = {{The Gaussian Bloom Filter}},
 pages = {191--206},
 volume = {9049},
 publisher = {Springer},
 isbn = {978-3-319-18119-6},
 series = {{Lecture notes in computer science Information systems and applications}},
 booktitle = {{Database systems for advanced applications}},
 year = {2015},
 doi = {10.1007/978-3-319-18120-2\_12}
}

@article{Werner.2021,
 author = {Werner, Martin},
 year = {2021},
 title = {{GloBiMapsAI: An AI-Enhanced Probabilistic Data Structure for Global Raster Datasets}},
 pages = {1--24},
 volume = {7},
 number = {4},
 issn = {2374-0353},
 journal = {{ACM Transactions on Spatial Algorithms and Systems}},
 doi = {10.1145/3453184}
}

\end{document}